\documentclass[10pt]{amsart}

\usepackage{xcomment,amssymb,latexsym,cite,upref,graphicx}

\usepackage[dvips]{color}
\usepackage[abs]{overpic}

\newtheorem{proposition}{Proposition}

\newcommand{\bR}{\mathbb{R}}

	\newcommand{\isogen}{\Lambda_{3}}  % in case we need a better notation
  % OR, =G^0 OR \bG  or \cG

\newcommand{\gf}{\mathfrak{g}}
\newcommand{\hf}{\mathfrak{h}}
\newcommand{\suf}{\mathfrak{su}}
\newcommand{\sof}{\mathfrak{so}}

\numberwithin{equation}{section}

\begin{document}

\title[Static Spherically Symmetric Solutions of the $SO(5)$ EYM equations]%
{Static Spherically Symmetric Solutions of the $SO(5)$ Einstein Yang-Mills Equations}

\author[R. Bartnik]{Robert Bartnik}
\address{School of Mathematical Sciences\\
Monash University, VIC 3800\\
Australia}
\email{robert.bartnik@sci.monash.edu.au}

\author[M. Fisher]{Mark Fisher}
\address{School of Mathematical Sciences\\
Monash University, VIC 3800\\
Australia}
\email{mark.fisher@sci.monash.edu.au}

\author[T.A. Oliynyk]{Todd A. Oliynyk}
\address{School of Mathematical Sciences\\
Monash University, VIC 3800\\
Australia}
\email{todd.oliynyk@sci.monash.edu.au}
\subjclass[2000]{83C20, 65L10}

\begin{abstract}
 Globally regular (ie.~asymptotically flat and regular interior),
 spherically symmetric and localised (``particle-like'') solutions of
 the coupled Einstein Yang-Mills (EYM) equations with gauge group
 $SU(2)$ have been known for more than 20 years, yet their properties are still not well understood.    
% Here we study the gauge group $SO(5)$; s
%
Spherically symmetric Yang--Mills fields are
 classified by a choice of isotropy generator and
%
% which reduces the available gauge symmetry, and 
%
 $SO(5)$ is distinguished as the simplest model with a
 \emph{non-Abelian} residual (little) group, $SU(2)\times U(1)$, and
 which admits globally regular particle-like solutions.
%
%it exhibits several characteristics not seen in Abelian models.
% Using gauge invariant polynomials 
%
We exhibit an algebraic gauge condition which normalises the residual
gauge freedom to a finite number of discrete symmetries. This generalises the well-known reduction to the real magnetic potential
$w(r,t)$ in the original $SU(2)$ YM model. Reformulating
using gauge invariant polynomials dramatically simplifies the system
and makes numerical search techniques feasible. We find three families
of embedded $SU(2)$ EYM equations within the $SO(5)$ system, 
one of which was first detected only within the gauge-invariant
polynomial reduced system.  Numerical solutions representing mixtures of the three
$SU(2)$ sub-systems are found, classified by a pair of positive
integers.
	\end{abstract}

	\maketitle

\section{Introduction}

The LHC experiments currently underway at CERN are expected to settle
prominent vexing questions such as the origin of the Higgs mechanism,
the existence of supersymmetry in nature, and the meaning of dark
matter. However, regardless of the discoveries still to flow from the LHC, we can with
confidence predict that the governing equations will incorporate the
Einstein equations for the gravitational field and the Yang-Mills
equations for some non-Abelian gauge group.  From this perspective,
the spherically symmetric solutions of the $SU(2)$-EYM equations
discovered numerically in 
\cite{BMK}\footnote{Existence of global solutions was proved rigorously in \cite{Smetal1,Smetal2}.} 
are particularly important, for several
reasons.  First, the particle-like properties exhibited by the
solutions, namely static, asymptotically flat, and globally regular
with spatial topology $\bR^3$, confound previous expectations, based
on the known non-existence results for the vacuum Einstein equations
and for the YM equations separately, that such solutions could not
exist.  On the other hand, they confirm Wheeler's ``geon'' hypothesis
\cite{Wheeler}, that localised semi-bound gravitational (and with
hindsight, Yang--Mills) configurations might exist.
 	
Numerical and  perturbation  results \cite{Straumann,ChoptuikBizon} 
show that the $SU(2)$ EYM spherically symmetric  solutions may be 
viewed as an unstable balance between a dispersive YM  and the 
attractive gravitational force.  However, there are many aspects
of these systems which have not yet been studied in depth, and 
it seems premature to conclude that instability is inevitable --- see       the review \cite{Galtsov}.

%For many years it was believed that because neither the (vacuum)
%% Einstein equations, nor the YM equations, admit globally regular
%% solutions with strongly localised energy, such ``EYM geon"
%% configurations \cite{Wheeler} were also impossible.  The discovery
%% \cite{BMK} that regular, localised solutions of the spherically
%% symmetric EYM equations can and do arise (as a semistable balance
%% between YM dispersion and gravitational attraction) 
%\texttt{These results and speculations have stimulated many
%studies of the EYM and closely related systems, see for example the
%review \cite{Galtsov}.  

The most detailed results have generally been obtained in the static
spherically symmetric setting with the simplest YM gauge group
$SU(2)$, which exhibits many of the basic properties of the general
non-Abelian group models. The spherically symmetric reduction of the
EYM equations can be viewed as a 2D EYM-Higgs system with a
\emph{residual gauge group}, $G^{\isogen}$, a residual Higgs field
(to be defined below), and a ``Mexican hat'' potential.  Most
attention has focused on the case where the gauge group is $SU(n)$ and
the residual group is Abelian, primarily because in this case it
is possible to completely fix the gauge freedom.  If the residual
group is non-Abelian, then it is known \cite{OK02b,OK03} that the
issue of gauge fixing becomes much more challenging.
	
To better understand gauge fixing  and solutions to the
the static spherically symmetric EYM equations we study $G=SO(5)$,
which is the simplest gauge group that supports a globally regular
model with \emph{non-Abelian} residual group. This model was discussed in
\cite{OK02b}, but the questions of gauge fixing and the existence of
solutions were not resolved.  We solve this problem here by using a 
related system satisfied by polynomials in the gauge fields which
are invariant under the action of the residual gauge group.

	\section{Static spherically symmetric field equations}
	
Spherical symmetry for Yang-Mills fields is complicated to define
because there are many ways to lift an $SO(3)$ action on space-time to
an action on the Yang-Mills connections. For real compact semisimple
gauge groups $G$, it was shown in \cite{B92,BS93} that equivalent
spherically symmetric Yang-Mills connections correspond to conjugacy
classes of homomorphisms of the isotropy subgroup, $U(1)$, into
$G$. The latter, in turn, are given by their generator $\isogen$,
the image of the basis vector $\tau_{3}$ of $\suf(2)$ (where
$\{\tau_{i}, i=1,\dots,3\}$ is a standard basis with
$[\tau_{i},\tau_{j}]=\epsilon^{k}_{\phantom{k}ij}\tau_{k}$), lying in
an integral lattice $I$ of a Cartan subalgebra $\hf_{0}$ of the Lie
algebra $\gf_{0}$ of $G$. This vector $\isogen$, when nontrivial,
then characterizes up to conjugacy an $\suf(2)$
subalgebra.\footnote{This ignores some interesting effects due to the
  fact that $SO(3)$ is not simply connected. For an analysis of
  $SO(3)$ actions on $SU(2)$ bundles see \cite{B97}.}
	
With the vector $\isogen\in \gf_0$ fixed, a gauge and a coordinate
system $(t,r,\theta,\phi)$ can be chosen \cite{B92,BS93} so that the
metric and gauge potential take the form
\begin{equation*}
ds^2=-S(r)^2\left(1-\frac{2m(r)}{r}\right)dt^2+\left(1-\frac{2m(r)}{r}\right)^{-1}dr^2+r^2\bigl(d\theta^2+\sin^2{\theta}\,
d\phi^2\bigr),
\end{equation*}
and\footnote{Here we are assuming the ansatz, that the gauge potential is ``purely
 magnetic''.}
\begin{equation*}
A =
\Lambda_1(r)d\theta+(\Lambda_2(r)\sin{\theta}+\isogen\cos{\theta})d\phi,
	\end{equation*}
	where $\Lambda_1(r),\Lambda_2(r)$ are $\gf_0$-valued maps
        satisfying
	\begin{equation} \label{Lcond}
	[\isogen,\Lambda_1(r)]=\Lambda_2(r), \quad \text{and} \quad
	[\Lambda_2(r),\isogen]=\Lambda_1(r).
	\end{equation}
	It is more convenient to use the following variables,
	\begin{equation} \label{Lpm}
	\Lambda_0 :=2i\isogen, \quad \text{and} \quad
	\Lambda_{\pm}:=\mp\Lambda_1-i\Lambda_2,
	\end{equation}
	which lie in $\gf$, the complexification of $\gf_0$. The conditions \eqref{Lcond} on
	$\Lambda_1$, $\Lambda_2$ then imply that the connection functions $\Lambda_{\pm}$ are valued in the residual Higgs bundles, $V_{\pm2}$, defined by
	\begin{equation} \label{V2def}
	\Lambda_{\pm}(r)\in V_{\pm 2}:=\left\{\,X\in\mathfrak{g}\, |\, [\Lambda_0,X]=\pm2 X\,\right\}.
	\end{equation}
	
	The \emph{residual (gauge) group}  $G^{\isogen}$ is defined as the connected Lie subgroup
	with Lie algebra
	\begin{equation*}
	\gf^{\isogen} := \{\, X\in \gf_0 \, | \, [\isogen{},X] = 0 \, \}.
	\end{equation*}
	We call the residual group (non-)Abelian if this is a (non-)Abelian Lie algebra. For $\gf^{\isogen}$ to be non-Abelian it is necessary but not sufficient that $\Lambda_0$ lies on the boundary of a Weyl chamber of $\hf$, a Cartan subalgebra of $\gf$.\\
	\\
	In the variables \eqref{Lpm}, the EYM equations become \cite{Oli02a}
	\begin{align}
	&m' =\frac{N}{2}||\Lambda_+'||^2+\frac{1}{8r^2}||\Lambda_0-[\Lambda_+,\Lambda_-]||^2, \label{meqn} \\
	&r^2\left(S^2N\Lambda_+'\right)'+S\left(\Lambda_+-\frac{1}{2}[[\Lambda_+,\Lambda_-],\Lambda_+]\right)=0, \label{ym2} \\
	&[\Lambda_+',\Lambda_-]+[\Lambda_-',\Lambda_+]=0, \label{ym1} \\
	&S^{-1}S'=\frac{1}{r}||\Lambda_+'||^2, \label{Seqn}
	\end{align}
	where $N=(1-2m/r)$. The $S$ parameter can be decoupled from the system, decreasing the order of the overall system by one but introducing a first order term in \eqref{ym2}. Also note that the second term in \eqref{ym2} is proportional to the gradient of the second term in \eqref{meqn}.  Requiring that the solutions are regular and asymptotically flat gives boundary conditions
	\begin{equation*}
	[\Lambda_+,\Lambda_-]=\Lambda_0, \quad \text{and} \quad
	[[\Lambda_+,\Lambda_-],\Lambda_+]=2\Lambda_+
	\end{equation*}
	at $r=0$, and as $r\rightarrow\infty$, respectively.
	
	The $||\cdot||$-norm is proportional, on each irreducible component of $\gf$, to the the real part of the Hermitean 
	inner-product derived from the Killing form \cite{Oli02a}.
%, and is up to normalization the real part of the Hermitian inner-product that arises from the Killing form and conjugation.
	Multiplying $||\cdot||$ by a constant factor leads via a global rescaling of $m$ and $r$ to the original equations. Specifically, we have:
	
	\begin{proposition}\label{propscale}
	If $(m(r),\Lambda_+(r))$ satisfies \eqref{meqn}-\eqref{ym1}, then $(\alpha
        m(r/\alpha),\Lambda_+(r/\alpha))$ satisfies the equations obtained by replacing $||\cdot||$ in
        equations \eqref{meqn}-\eqref{ym1} with $\alpha||\cdot||$. 
	\end{proposition}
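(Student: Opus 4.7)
\smallskip

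\noindent\textbf{Proof plan for Proposition \ref{propscale}.} The strategy is a direct substitution and chain-rule argument: introduce the rescaled variable $s=r/\alpha$ and show that the tilde-quantities $\tilde m(r):=\alpha m(s)$, $\tilde \Lambda_+(r):=\Lambda_+(s)$ (together with $\tilde S(r):=S(s)$, which implicitly rescales as well) plug into the equations so that the only surplus factors of $\alpha$ sit on the norm. The main thing to keep track of is how many powers of $\alpha$ each term picks up under differentiation and under the substitution $r=\alpha s$.

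First I would record the basic transformations: $\tilde m'(r)=m'(s)$, $\tilde \Lambda_+'(r)=\alpha^{-1}\Lambda_+'(s)$, $\tilde N(r)=N(s)$, and $r^2=\alpha^2 s^2$. Substituting into \eqref{meqn}, each $\|\tilde\Lambda_+'\|^2$ supplies an extra $\alpha^{-2}$ from the derivative while $r^{-2}$ supplies another $\alpha^{-2}$ on the algebraic commutator term; hence both terms on the right-hand side are multiplied by $\alpha^{-2}$ relative to \eqref{meqn} at $s$, and rescaling $\|\cdot\|\mapsto\alpha\|\cdot\|$ (so $\|\cdot\|^2\mapsto\alpha^2\|\cdot\|^2$) exactly compensates, reproducing $m'(s)=\tilde m'(r)$.

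Next I would verify \eqref{ym2} and \eqref{ym1}, both of which contain no norm. For \eqref{ym2} the inner bracket $\tilde S^2\tilde N\tilde\Lambda_+'$ scales by $\alpha^{-1}$, a further derivative yields $\alpha^{-2}$, and the prefactor $r^2=\alpha^2 s^2$ cancels it, so the equation evaluated on the tildes at $r$ reduces to \eqref{ym2} at $s$. Equation \eqref{ym1} scales homogeneously in $\alpha^{-1}$ and vanishes iff the original does. Finally, for the auxiliary equation \eqref{Seqn}, the logarithmic derivative $\tilde S^{-1}\tilde S'(r)=\alpha^{-1}S^{-1}S'(s)$ matches $r^{-1}\|\tilde\Lambda_+'\|^2$ once the norm is rescaled by $\alpha$, confirming consistency.

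There is no genuine obstacle here; the content of the proposition is the bookkeeping of $\alpha$-weights. The only mildly delicate point is being explicit that the rescaling of $S$ (which is not mentioned in the statement but implicitly enters via \eqref{ym2}) is $\tilde S(r)=S(r/\alpha)$, with no prefactor, since \eqref{Seqn} forces this choice once $\|\cdot\|$ is replaced by $\alpha\|\cdot\|$.
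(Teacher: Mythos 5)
Your proof is correct and is exactly the paper's approach: the paper's entire proof is the single word ``Substitution,'' and you have simply carried out that substitution explicitly, with the correct bookkeeping of $\alpha$-weights (including the right interpretation $\|\cdot\|^2\mapsto\alpha^2\|\cdot\|^2$ and the unscaled $\tilde S(r)=S(r/\alpha)$).
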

	\begin{proof}
	Substitution.
	\end{proof}
	The EYM system \eqref{meqn}-\eqref{ym1} typically admits subfamilies of solutions which also satisfy the original $SU(2)$ but with rescaled norm $||\cdot||$ (see Section 5).  We may regard such solutions are equivalent to the original $SU(2)$, since all values are consistently rescaled from the values in the original reports \cite{BFM,KSUN,KKSU3}.

	\section{An $SO(5)$ model}
	
	We now specialize to the gauge group $G=SO(5)$. The complexified Lie algebra $\gf=\sof(5,\mathbb{C})$ has the Cartan decomposition
	\begin{equation*}
	\mathfrak{g} = \mathfrak{h} \oplus \bigoplus_{\alpha \in R}\mathbb{C}e_{\alpha},
	\end{equation*}
	where $\mathfrak{h}=\text{span}_\mathbb{C}[h_{\alpha_1},h_{\alpha_2}]$ is the Cartan subalgebra, $R={\left\{\alpha_{\pm i},i=1,\ldots,4\right\}}$ is a root system in $\mathfrak{h}^*$
	with root diagram as in Figure \protect{\ref{rootdiagram}}.
	\setlength{\unitlength}{1mm}
	\begin{figure}[h]
	\begin{center}
	\begin{picture}(60,60)
	\put(10,10){\circle*{1.5}}
	\put(30,10){\circle*{1.5}}
	\put(50,10){\circle*{1.5}}
	\put(10,30){\circle*{1.5}}
	\put(50,30){\circle*{1.5}}
	\put(10,50){\circle*{1.5}}
	\put(30,50){\circle*{1.5}}
	\put(50,50){\circle*{1.5}}
	
	\thinlines
	\put(6,30){\vector(1,0){52}}
	\put(30,6){\vector(0,1){52}}
	
	\thicklines
	\put(30,30){\line(1,1){20}}
	\put(30,30){\line(-1,1){20}}
	\put(30,30){\line(1,0){20}}
	\put(30,30){\line(0,1){20}}
	\put(30,30){\line(-1,-1){20}}
	\put(30,30){\line(1,-1){20}}
	\put(30,30){\line(-1,0){20}}
	\put(30,30){\line(0,-1){20}}
	
	\put(57,31){$H_1$}
	\put(31,57){$H_{2}$}
	\put(8,8){$\alpha_{-1}$}
	\put(8,28){$\alpha_{-2}$}
	\put(8,52){$\alpha_{-3}$}
	\put(32,8){$\alpha_{-4}$}
	\put(32,52){$\alpha_{4}$}
	\put(52,8){$\alpha_{3}$}
	\put(52,28){$\alpha_2$}
	\put(52,52){$\alpha_1$}
	\end{picture}\caption{Root Diagram for $\sof(5,\mathbb{C})$}\protect{\label{rootdiagram}}
	\end{center}
	\end{figure}
	
	The reduced spherically symmetric EYM model depends strongly on the choice of isotropy generator $\Lambda_0$. For $G=SO(5)$, we choose \cite{OK02b}
	\begin{equation*}
	\Lambda_0 =2H_{1}.
	\end{equation*}
	With this choice, the residual group is $SU(2)\times U(1)$, while (see \eqref{V2def})
	\begin{equation*}
	V_2=\text{span}_{\mathbb{C}}[e_{\alpha_1},e_{\alpha_2},e_{\alpha_3}],
	\end{equation*}
	and hence
	\begin{align}
	\Lambda_+(r) &=w_1(r) e_{\alpha_1}+w_2(r)e_{\alpha_2}+w_3(r)e_{\alpha_3}, \label{so5Lp} \\
	\Lambda_-(r)&=\overline{w}_1(r)e_{\alpha_{-1}}+\overline{w}_2(r)e_{\alpha_{-2}}+\overline{w}_3(r)e_{\alpha_{-3}}. \label{so5Lm}
	\end{align}
	Substituting \eqref{so5Lp} and \eqref{so5Lm} into \eqref{meqn}-\eqref{ym1}, we find that the static spherically symmetric
	EYM equations in terms of the $w_i$ are
	\begin{align}
	m' = \frac{N}{2}(|w_1'|^2+|w_2'|^2+|w_3'|^2) + \frac{1}{2}{\mathcal{P}}&, \label{mso5eqn}\\
	r^2Nw_1''+(2m-\frac{1}{r}{\mathcal{P}})w_1'+w_1(1-|w_1|^2)+w_2(\frac{w_2\overline{w}_3}{2}-w_1\overline{w}_2)= 0&,
	\label{ym2so5.1}\\
	r^2Nw_2''+(2m-\frac{1}{r}{\mathcal{P}})w_2'+w_2(1-\frac{|w_2|^2}{2}-|w_1|^2-|w_3|^2)+\overline{w}_2w_1w_3= 0&,
	\label{ym2so5.2}\\
	r^2Nw_3''+(2m-\frac{1}{r}{\mathcal{P}})w_3'+w_3(1-|w_3|^2)+w_2(\frac{w_2\overline{w}_1}{2}-w_3\overline{w}_2)= 0&,
	\label{ym2so5.3}\\
	w_1'\overline{w}_1-w_1\overline{w}_1'+w_2'\overline{w}_2-w_2\overline{w}_2'+w_3'\overline{w}_3-w_3\overline{w}_3'=0&,\label{ym1so5.1}\\
	w_1'\overline{w}_1-w_1\overline{w}_1'-(w_3'\overline{w}_3-w_3\overline{w}_3')=0&,\label{ym1so5.2}\\
	w_1'\overline{w}_2-w_1\overline{w}_2'-(w_2'\overline{w}_3-w_2\overline{w}_3')=0&,\label{ym1so5.3}
	\intertext{where \begin{align}\nonumber
	{\mathcal{P}} &:= \frac{(1-|w_1|^2)^2}{2}+\frac{(1-|w_3|^2)^2}{2}+|w_2|^2(|w_1|^2+|w_3|^2+\frac{|w_2|^2}{4}-1)-\Re{[w_2^2\overline{w_1w_3}]}\\\nonumber
	&=\frac{1}{4}\left(|w_1|^2+|w_2|^2+|w_3|^2-2\right)^2+\frac{1}{4}\left(|w_1|^2-|w_3|^2\right)^2+\frac{1}{2}\left|w_1\overline{w}_2-w_2\overline{w}_3\right|^2 \end{align}}
	\intertext{The boundary conditions are}
	|w_1|^2+|w_2|^2+|w_3|^2=2&,\label{bcz.1}\\
	|w_1|-|w_3|=0&,\label{bcz.2}\\
	w_1\overline{w}_2-w_2\overline{w}_3=0&,\label{bcz.3}\\\nonumber
	\intertext{at $r=0$, and}
	w_1(1-|w_1|^2)+w_2(\frac{w_2\overline{w}_3}{2}-w_1\overline{w}_2)= 0&,\label{bci1}\\
	w_2(1-\frac{|w_2|^2}{2})-w_2(|w_1|^2+|w_3|^2)+\overline{w}_2w_1w_3= 0&,\label{bci2}\\
	w_3(1-|w_3|^2)+w_2(\frac{w_2\overline{w}_1}{2}-w_3\overline{w}_2)= 0&\label{bci3},
	\end{align}
	as $r\rightarrow\infty$.
	
	The sets of boundary conditions \eqref{bcz.1}-\eqref{bcz.3} and \eqref{bci1}-\eqref{bci3} define algebraic varieties on $\mathbb{C}^3$ and as such they do not restrict the initial and final values to one particular point in $\mathbb{C}^3$ (or indeed a discrete set of such points). Given a point lying on one of these varieties there will in general be a set of gauge-equivalent points but it is not immediately obvious what these equivalence classes of points are from the defining equations. This is a problematic feature of all models with a non-Abelian residual gauge group and will be discussed elsewhere.
	
	\section{Gauge fixing and the reduced equations}
	
	For the $SO(5)$ model, a direct computation shows that the ansatz
	\begin{equation}\label{Xuveq}
	\Lambda_+(r) = u(r)e_{\alpha_1}+0e_{\alpha_{2}}+v(r)e_{\alpha_{3}}, \, \, \, u(r),v(r)\in\mathbb{R}
	\end{equation}
	automatically satisfies the constraint equations \eqref{ym1} (equivalently 
	\eqref{ym1so5.1}-\eqref{ym1so5.3}), while the equations \eqref{meqn}-\eqref{ym2}
	(equivalently \eqref{mso5eqn}-\eqref{ym2so5.3}) reduce to
	\begin{align}
	&r^2Nu''+\left(2m-\frac{(1-u^2)^2+(1-v^2)^2}{2r}\right)u'+u(1-u^2)= 0, \label{red1}\\
	&r^2Nv''+\left(2m-\frac{(1-u^2)^2+(1-v^2)^2}{2r}\right)v'+v(1-v^2)= 0, \label{red2}\\
	&m' = \frac{N}{2}((u')^2+(v')^2) + \frac{1}{4r^2}({(1-u^2)^2}+{(1-v^2)^2), \label{red3}}
	\end{align}
	with boundary conditions
	\begin{equation} \label{redbc1}
	u^2+v^2=2, \quad |u|=|v|
	\end{equation}
	at $r=0$  and
	\begin{equation} \label{redbc2}
	u(1-u^2)= 0, \quad
	v(1-v^2)= 0
	\end{equation}
	as $r\rightarrow\infty$.
	
	The above results show that \eqref{Xuveq} is a consistent ansatz for the static spherically symmetric EYM equations. In fact the equations that result could have been obtained from a model with gauge group $SU(2)\times SU(2)$. 
	However, it is not obvious that the choice \eqref{Xuveq} is equivalent to fixing a section of the gauge fields.
	To see this we use the fact that the following two polynomials
	\begin{align*}
	K_1:=2||\Lambda_+||^2=&2\left(|w_1|^2+|w_2|^2+|w_3|^2\right),\\
	K_2:=2||[\Lambda_-,\Lambda_+]||^2=&4|w_1|^4+2|w_2|^4+4|w_3|^4\\
	+&8|w_1|^2|w_2|^2+8|w_3|^2|w_2|^2-8\mathrm{Re}[w_2^2\overline{w_1 w_3}],
	\end{align*}
	are a complete set of generators\footnote{This can be established by computing the Molien function to determine the appropriate orders of the polynomials and then verifying $K_1$ and $K_2$ are algebraically independent. See \cite{Forger} for some useful formulae.} for the ring of residual-group-invariant polynomials in $\Lambda_+$.  When $(w_1,w_2,w_3)=(u,0,v)$ for real $u,v$, these become
	\begin{equation*}
	K_1=2(u^2+v^2), \quad \text{and}\quad
	K_2=4(u^4+v^4).
	\end{equation*}
	These equations can be inverted to give
	\begin{equation*}
	u=\pm_1\left(\frac{K_1}{4}\pm_3\frac{\sqrt{2K_2-K_1^2}}{4}\right)^{\frac{1}{2}}, \quad \text{and} \quad
	v=\pm_2\left(\frac{K_1}{4}\mp_3\frac{\sqrt{2K_2-K_1^2}}{4}\right)^{\frac{1}{2}}
	\end{equation*}
	which are well-defined for all $K_1,K_2$ since the inequalities
	\begin{equation*}
	2K_2-K_1^2\geq 0, \quad \text{and} \quad
	K_2-K_1^2\leq 0
	\end{equation*}
	follow directly from the definition of $K_1$ and $K_2$. The $\pm$-signs amount to choosing which function is $u$ and which is $v$ and deciding on an arbitrary convention for their initial values.

	\section{Numerical results}
	
	From the form of the reduced equations \eqref{red1}-\eqref{red3}, it is clear that we can make two distinct simplifying
	assumptions $u=v$ or $v=\pm1$ (equivalently $u=\pm1$).
	
	\bigskip
	
	\noindent \textbf{(i)} Setting $u=v$, the reduced equations \eqref{red1}-\eqref{red3} become
	\begin{align*}
	&r^2Nu''+\left(2m-\frac{(1-u^2)^2}{r}\right)u'+u(1-u^2)= 0,\\
	&m' = N(u')^2 + \frac{1}{2r^2}(1-u^2)^2.
	\end{align*}
	which is the $SU(2)$ equation and has the well-known Bartnik-McKinnon family of solutions \cite{BMK,BFM}. 
	We note that these correspond to the embedded $SU(2)$ solutions as they satisfy 
	$\Lambda_{\pm}(r) = u(r)\Omega_{\pm}$ where $\Omega_{\pm}$ are fixed vectors satisfying 
	$[\Lambda_0,\Omega_{\pm}]=2\Omega_{\pm}$ and $[\Omega_+,\Omega_{-}]=\Lambda_0$. 
	
	\bigskip
	
	\noindent \textbf{(ii)} Setting $v=1$, the reduced equations \eqref{red1}-\eqref{red3} become
	\begin{align*}
	&r^2Nu''+\left(2m-\frac{(1-u^2)^2}{2r}\right)u'+u(1-u^2)= 0,\\
	&m' = \frac{N}{2}(u')^2 + \frac{1}{4r^2}(1-u^2)^2.
	\end{align*}
	As in Proposition \ref{propscale}, rescaling these equations by $\sqrt{2}$ in $m$ and $r$ results in the $SU(2)$-equations above, giving another family of solutions where the $u$ field is a radially-scaled Bartnik-McKinnon solution with smaller mass and the $v$ field remains constant.
	\begin{figure}
	\begin{overpic}[scale=.4,unit=1mm]%
	{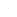}
	\put(0,0){\includegraphics[scale=.52]%
		{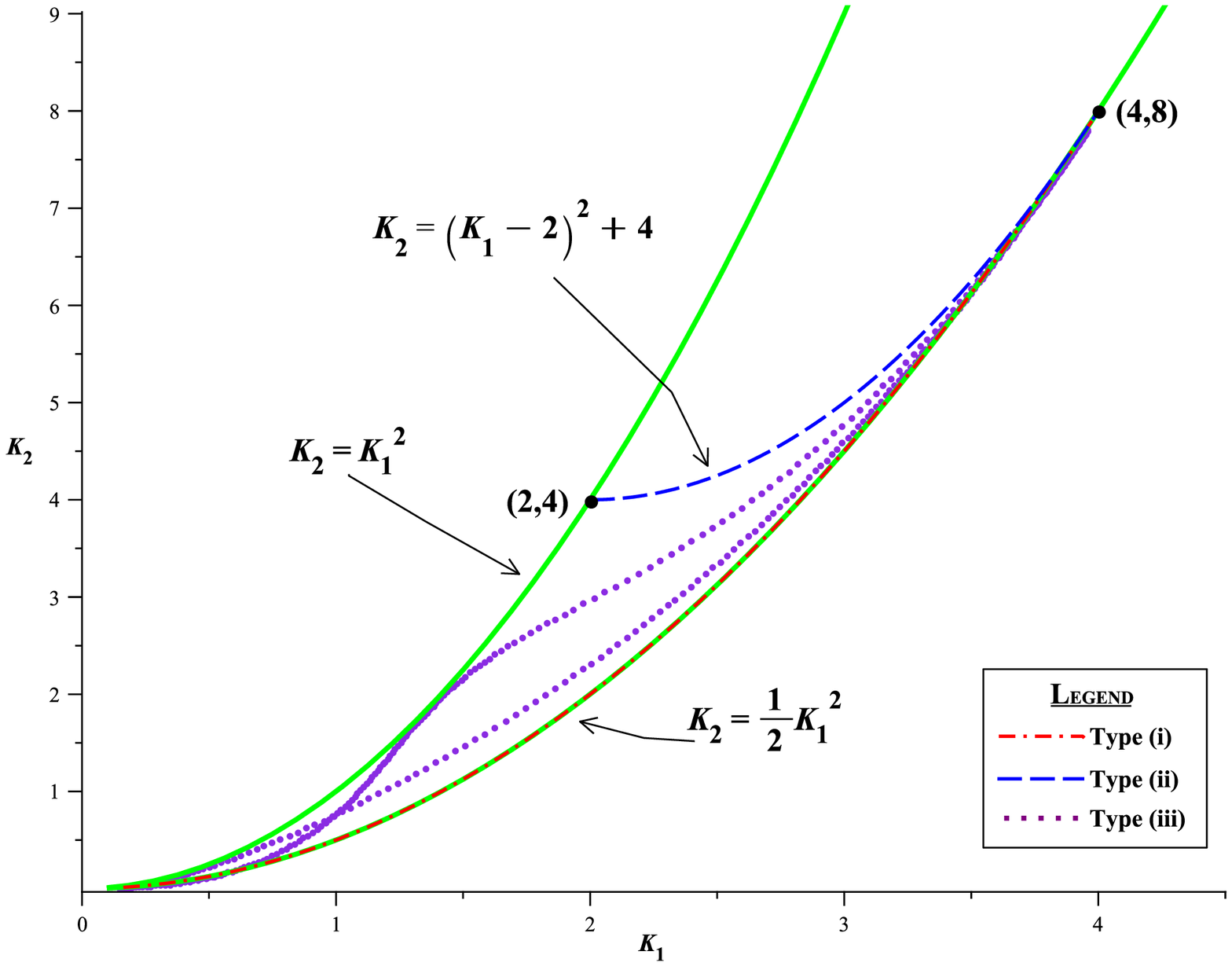}}
	\put(-34,14){\includegraphics[scale=.27]%
		{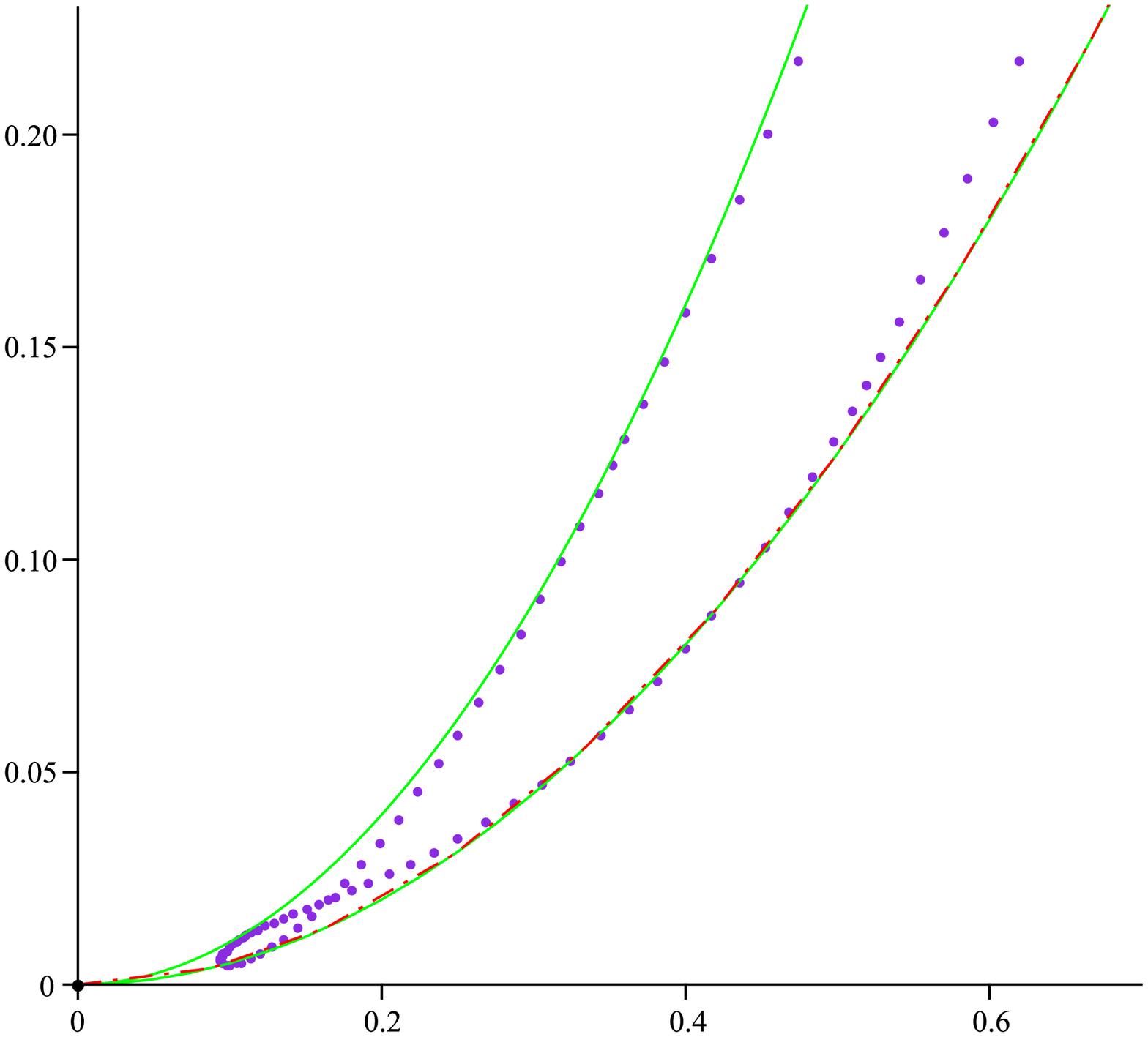}}
	\put(6,6){\framebox(16,3.5)}
	\put(-34,14.5){\framebox(52,48)}	
	\end{overpic}\caption{A plot of the solutions in the space of the invariant polynomials. All solutions start at $(4,8)$.The type (i) $SU(2)$ solutions lie on the $K_2=\frac{1}{2}K_1^2$ curve, while the type (ii) $SU(2)$ solutions lie on the $K_2=(K_1-2)^2+4$ curve. The type (iii) solutions are curves in the combined interior of the three parabolas touching the boundaries tangentially (an illustration of the (2,1) type (iii) solution is shown) }\label{parabFig}
	\end{figure}
	
	For the above two special situations, existence of a countably infinite number of solutions is guaranteed by the existence
	theorems in \cite{BFM,Smetal1,Smetal2}. Integrating the reduced equations \eqref{red1}-\eqref{red3} numerically, we 
	also find evidence for solutions that are not of the type (i) or (ii) above. We will call these solutions
	type (iii). Figure \ref{parabFig} indicates where these solutions lie in the space of invariant polynomials.
	All three types of solutions can be characterized by the number of nodes of the functions
	$u$ and $v$. Figure \ref{nodeFig} enumerates all of the solutions that were found numerically or
	that are known to exist analytically. The type (i) solutions lie on the diagonal $n_u=n_v$, while the
	type (ii) solutions lie on either the horizontal $n_v=0$ or vertical axis $n_u=0$. The remaining solutions
	are of type (iii) and are indicated on the node diagram by the larger circles. 
	We conjecture, based on the diagram, that for each $(n_u,n_v)\in \mathbb{Z}_{\geq0}\times
	\mathbb{Z}_{\geq 0}$ there exists a solution to the reduced equations \eqref{red1}-\eqref{red3} satisfying
	the boundary conditions \eqref{redbc1}-\eqref{redbc2}.
	\setlength{\unitlength}{1cm}
	\begin{figure}[ht]
	\begin{center}
	\begin{picture}(6,6)
	\put(0,0){\circle{0.1}}
	\put(0,1){\circle*{0.1}}
	\put(0,2){\circle*{0.1}}
	\put(0,3){\circle*{0.1}}
	\put(0,4){\circle*{0.1}}
	\put(0,5){\circle*{0.1}}
	\put(1,0){\circle*{0.1}}
	\put(2,0){\circle*{0.1}}
	\put(3,0){\circle*{0.1}}
	\put(4,0){\circle*{0.1}}
	\put(5,0){\circle*{0.1}}
	\put(1,1){\circle{0.1}}
	\put(1,2){\circle*{0.2}}
	\put(1,3){\circle*{0.2}}
	\put(2,1){\circle*{0.2}}
	\put(2,2){\circle{0.1}}
	\put(2,3){\circle*{0.2}}
	\put(3,1){\circle*{0.2}}
	\put(3,2){\circle*{0.2}}
	\put(3,3){\circle{0.1}}
	\put(4,4){\circle{0.1}}
	\put(5,5){\circle{0.1}}
	\put(1,4){-}
	\put(1,5){-}
	\put(2,4){-}
	\put(2,5){-}
	\put(3,4){-}
	\put(3,5){-}
	\put(4,5){-}
	\put(4,1){-}
	\put(5,1){-}
	\put(4,2){-}
	\put(5,2){-}
	\put(4,3){-}
	\put(5,3){-}
	\put(5,4){-}
	\put(0,-0.3){0}
	\put(1,-0.3){1}
	\put(2,-0.3){2}
	\put(3,-0.3){3}
	\put(4,-0.3){4}
	\put(5,-0.3){5}
	\put(-0.3,1){1}
	\put(-0.3,2){2}
	\put(-0.3,3){3}
	\put(-0.3,4){4}
	\put(-0.3,5){5}
	\thinlines
	\put(0,0){\vector(1,0){6}}
	\put(0,0){\vector(0,1){6}}
	
	\put(5.5,0.1){$n_u$}
	\put(0.1,5.5){$n_v$}
	\end{picture}
	\end{center}\caption{Node Diagram for solutions to the reduced equation } \label{nodeFig}
	\end{figure}
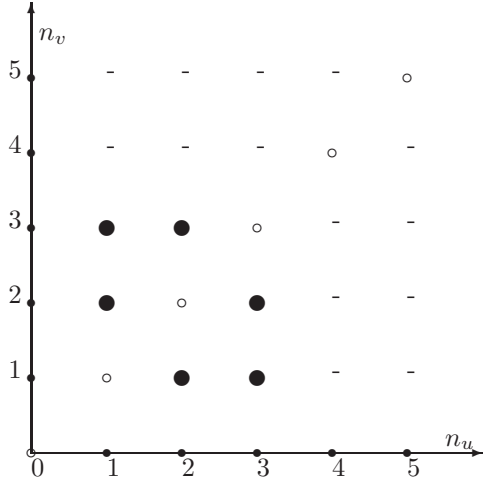

	For the numerical analysis we use the ``shooting to a fitting point" method \cite{NUMREC}, using the Matlab routine \texttt{ode45} which is a fourth-order Runge-Kutta solver. Using the power series expansions
	\begin{align*}
	&u=1+a_1r^2+(\frac{2}{5}a_1^3+\frac{2}{5}a_1a_2^2+\frac{3}{10}a_1^2)r^4 + O(r^6),\\
	&v=1+a_2r^2+(\frac{2}{5}a_2a_1^2+\frac{2}{5}a_2^3+\frac{3}{10}a_2^2)r^4 + O(r^6),\\
	&m=(a_1^2+a_2^2)r^3+(\frac{4}{5}a_1^3+\frac{4}{5}a_2^3)r^5 + O(r^7),
	\end{align*}
	near $r=0$, and
	\begin{align*}
	&u=\pm 1 + \frac{b_1}{r}+O(\frac{1}{r^2}),\\
	&v=\pm 1 + \frac{b_2}{r}+O(\frac{1}{r^2}),\\
	&m=m_{\infty}+O(\frac{1}{r^3}),
	\end{align*}
	near $r=\infty$. We shoot from the origin out trying to maximise (by searching in the parameters $a_i$ evaluated at $r=0.01$) the $r$-value that is attained before the solution violates one of the necessary conditions for a global smooth solution. Suitable large $r$-values will then give an indication what the $b_i$ and $m_{\infty}$ parameters should approximately be for the shooting back from some large $r$-value (we use $r=10000$). We then use these parameter values as an initial guess in the ``shooting to a fitting point" search, where the fitting was done at $r=10$.
	
	\section{Discussion}
	
	We have found three solutions using the numerical technique from the previous section. When combined with the two families of $SU(2)$ type solutions, they suggest that there is a solution for each point $(n_u,n_v)$ on the nonnegative integer lattice, where $n_u, n_v$ denote the number of nodes of $u$ and $v$ respectively. The numerical values are given in Table \ref{tab:TableOfNumericalParameters} and some type $(iii)$ solutions are displayed in Figures \ref{fig:fig21} and \ref{fig:fig32}.
	
	It is only with the reduced variables we have presented that these solutions could be found. As an indication of the difficulties otherwise, consider the particulars of the shooting method technique. In the original $w$-variables the ODE system is given in terms of four complex variables and the mass, $m$. This leads to eight real shooting parameters near $r=0$ and nine shooting parameters near $r=\infty$. Furthermore, after finding a promising approximate solution by shooting outward from $r=0$ and another by shooting inward from $r=\infty$, they will generally need to be gauge rotated to match at the fitting point (if indeed they are gauge equivalent-solutions). This all amounts to a difficult optimization problem in $17$ variables. This problem was discussed in \cite{OK02b} where some approximate solutions were obtained by shooting but a global solution was not found. 
	
	In contrast, shooting for the reduced variables has only five parameters to search in, two parameters near $r=0$ and three near $r=\infty$ with no need to gauge rotate possible solutions at the fitting point. The removal of degeneracy from the numerical problem and minimization of the number of shooting parameters is what has made it tractable to find the solutions presented here.
	
	\begin{table*}[ht]
		\centering
			\begin{tabular}{||c|c|c|l|l|l|l|c|c|c||}\hline
	$m_{\infty}$	&$u_{\infty}$	&$v_{\infty}$	  &$a_1$		&$a_2$ &$b_1$   &$b_2$		&$n_u$	  &$n_v$&type\\\hline
	0		 		&1	& 1	&0					&0					&0				&0					&0	&0&(i) \\
	0.58595	&-1	& 1	&-0.9074325	&0					&0.631715	&0					&1	&0&(ii) \\
	0.68685	& 1	& 1	&-1.303451	&0					&-6.26770	&0					&2	&0&(ii) \\		
	0.70380	&-1	& 1	&-1.394080	&0					&41.6713	&0					&3	&0&(ii) \\
	0.70655	& 1	& 1	&-1.409757	&0					&-259.038	&0					&4	&0&(ii) \\	
	0.70700	&-1	& 1	&-1.412337	&0					&1592.32	&0					&5	&0&(ii) \\
	0.70710	& 1	& 1	&-1.412759	&0					&-9770.35	&0					&6	&0&(ii) \\
	0.82865	&-1	&-1	&-0.45372		&-0.45372		&0.8934	  &0.8934	  	&1	&1&(i) \\
	0.92377	& 1	&-1	&-1.187117	&-0.117170	&-6.32743	&1.68255	 	&2	&1&(iii) \\
	0.93982	&-1	&-1	&-1.371735	&-0.022338	&37.63922	&1.67940	 	&3	&1&(iii) \\
	0.97135	& 1	& 1	&-0.65173		&-0.65173		&-8.8639	&-8.8639		&2	&2&(i) \\		
	0.98729	&-1	& 1	&-1.246544	&-0.155699	&42.52411	&-13.88445	&3	&2&(iii) \\
	0.99532	&-1	&-1	&-0.69704		&-0.69704		&58.9326	&58.9326		&3	&3&(i) \\
	0.99924	& 1	& 1	&-0.70488		&-0.70488		&-366.335	&-366.335		&4	&4&(i) \\
	0.99988	&-1	&-1	&-0.70617		&-0.70617		&2251.89	&2251.89		&5	&5&(i) \\
	0.99998	& 1	& 1	&-0.70638		&-0.70638		&-13817.4	&-13817.4		&6	&6&(i) \\\hline		
			\end{tabular}
		\caption{Table of Numerical Parameters}
		\label{tab:TableOfNumericalParameters}
	\end{table*}
	
	\begin{figure}
	\begin{overpic}[width=11.2cm,height=7cm]%
	{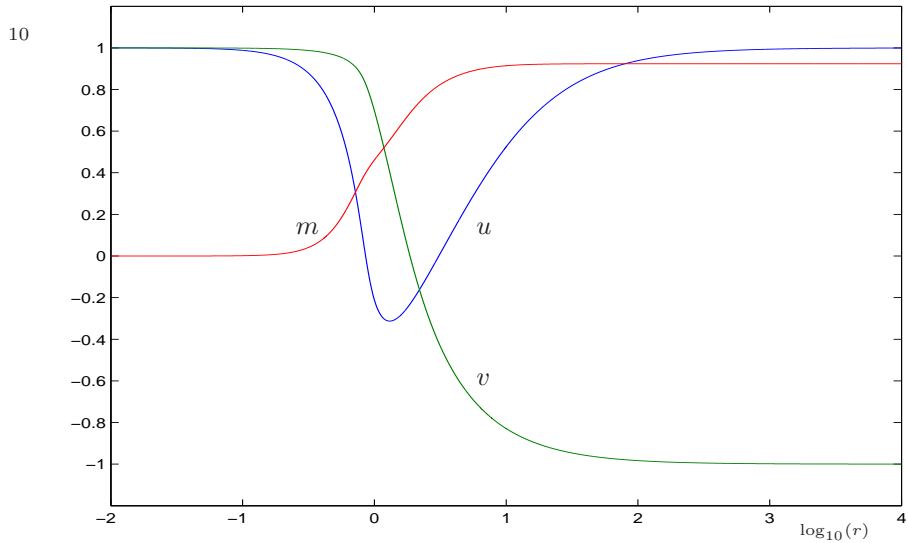}
	\put(3,3.9){$m$}
	\put(5.4,3.9){$u$}
	\put(5.4,1.9){$v$}
	\put(9.7,-0.1){{\tiny $\log_{10}(r)$}}
	\end{overpic}\caption{The $(2,1)$ type $(iii)$ solution. The mass, $m$, increases monotonically from $0$ to $0.924$, $u$ starts and ends at $1$ with two nodes and $v$ goes from $+1$ to $-1$ with one node.}\label{fig:fig21}
	\end{figure}
	
	\begin{figure}
	\begin{overpic}[width=11.2cm,height=7cm]%
	{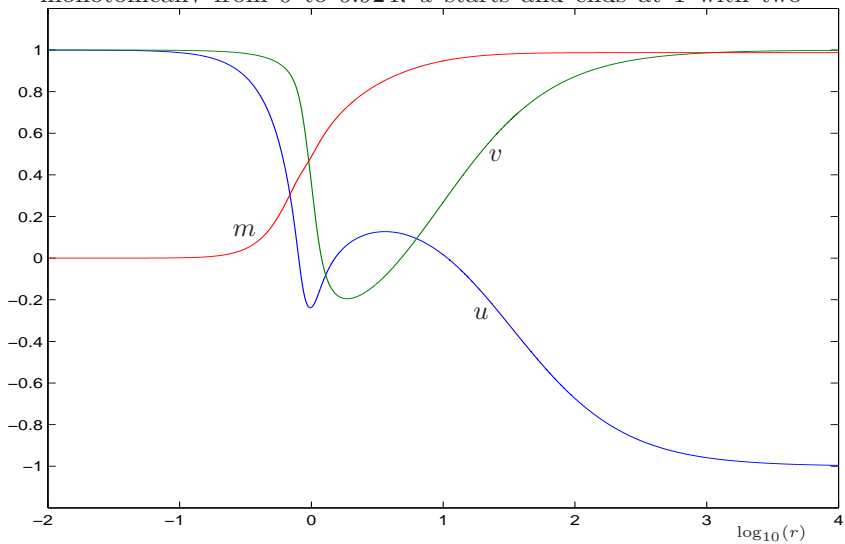}
	\put(3,3.9){$m$}
	\put(6.2,2.8){$u$}
	\put(6.4,4.9){$v$}
	\put(9.7,-0.1){{\tiny $\log_{10}(r)$}}
	\end{overpic}
	\caption{The $(3,2)$ type $(iii)$ solution. The mass, $m$, increases monotonically from $0$ to $0.987$, $u$ goes from $1$ to $-1$ with three nodes and $v$ goes from $+1$ to $+1$ with two nodes.}\label{fig:fig32}
	\end{figure}
	\eject

\end{document}